\newtheorem{theorem}{Theorem}
\newtheorem{corollary}[theorem]{Corollary}
\newfont{\mbb}{msbm10 scaled 1100}
\definecolor{red}{rgb}{0.9,0.0,0.1}
\definecolor{blue}{rgb}{.15,.4,.8}
\definecolor{green}{rgb}{.25,.8,.25}
\definecolor{orange}{RGB}{255,128,0}
\definecolor{dgreen}{RGB}{0,151,0}
\def\Pr{{\rm Pr}} 
\def\b0{{\bf 0}}
\title{Localized Dimension Growth: A
Convolutional Random Network Coding Approach to Managing Memory and
Decoding Delay}
\author{Wangmei~Guo,
        Xiaomeng~Shi,~\IEEEmembership{Student~Member,~IEEE,}
        Ning~Cai,~\IEEEmembership{Member,~IEEE,}
        and Muriel~M\'{e}dard,~\IEEEmembership{Fellow,~IEEE}
\thanks{This work has been partially presented at ISIT 2011.}
\thanks{Wangmei Guo and Ning Cai are with The State Key Laboratory of ISN, Xidian University, Xi'an, China. email: \{wangmeiguo, caining\}@mail.xidian.edu.cn. This material is based upon work supported by the National Natural Science Foundation of China under Grant No. 60832001.}%
\thanks{Xiaomeng Shi and Muriel M\'{e}dard are with the Research Laboratory of Electronics, Department of Electrical Engineering and Computer Science, Massachusetts Institute of Technology, Cambridge, MA, USA. e-mail: \{xshi, medard\}@mit.edu. This material is based upon work supported by the Air Force Office of Scientific Research (AFOSR) under award number 016974-002 S, the Claude E. Shannon Research Assitantship from RLE, and by the NSERC Postgraduate Scholarship (PGS) issued by the Natural Sciences and Engineering Research Council of Canada.}}
\begin{document}
\maketitle
\begin{abstract}
We consider an \textit{Adaptive Random Convolutional Network Coding} (ARCNC) algorithm to address the issue of field size in random network coding for multicast, and study its memory and decoding delay performances through both analysis and numerical simulations. ARCNC operates as a convolutional code, with the coefficients of local encoding kernels chosen randomly over a small finite field. The cardinality of local encoding kernels increases with time until the
global encoding kernel matrices at related sink nodes have full rank. 
ARCNC adapts to unknown network topologies without prior knowledge, by locally incrementing the dimensionality of the convolutional code. 
Because convolutional codes of different constraint lengths can coexist in different portions of the network, reductions in decoding delay and memory overheads can be achieved. We show that this method performs no worse than random linear network codes in terms of decodability, and can provide significant gains in terms of average decoding delay or memory in combination, shuttle and random geometric networks.

\end{abstract}

\begin{IEEEkeywords}
convolutional network codes, random linear network codes, adaptive
random convolutional network code, combination networks, random
graphs.
\end{IEEEkeywords}

%
\section{Introduction}\label{sec:introduction}
\IEEEPARstart{S}{ince} its introduction \cite{ACLY2000}, network coding has been shown to offer advantages in throughput, power consumption, and security in   wireline and wireless networks. Field size and adaptation to unknown topologies are two of the key issues in network coding. Li et al. showed constructively that the max-flow bound is achievable by linear algebraic network coding (ANC) if the
field is sufficiently large for a given deterministic multicast network \cite{LYC2003}, while Ho et~al. \cite{HMK2006} proposed a distributed random linear network code (RLNC) construction that achieves the multicast capacity with probability $(1-d/q)^\eta$, where $\eta$ is the number of links with random coefficients, $d$ is the number of sinks, and $q$ is the field
size. Because of its construction simplicity and the ability to adapt to unknown topologies, RLNC is often preferred over deterministic network codes. While the construction in \cite{HMK2006} allows cycles, which leads to the creation of
convolutional codes, it does not make use of the convolutional nature of the resulting codes to lighten bounds on field size, which may need to be large to guarantee decoding success at all sinks. Both block network codes (BNC) \cite{medard2003coding,JSCEEJT2005} and convolutional network codes
(CNC) \cite{OCNC2006,EF2004} can mitigate field size requirements. M\'edard et al. introduced the concept of BNC \cite{medard2003coding}; Xiao et al. proposed a deterministic binary BNC to solve the combination network problem \cite{xiao2008binary}. BNC can operate on smaller finite fields, but the block length may need to be pre-determined according to network size. In discussing cyclic networks, both Li et al. and Ho et al. pointed out the equivalence between ANC in cyclic networks with delays, and CNC \cite{LYC2003,HMK2006}. Because of coding introduced across the temporal domain, CNC in general does not have a field size constraint.

Combining the adaptive and distributive advantages of RLNC and the field-size independence of CNC, we proposed adaptive random convolutional network code (ARCNC) in \cite{guo2011localized} as a localized coding scheme for single-source multicast. ARCNC randomly chooses local encoding kernels from a small field, and the code constraint length increases locally at each node. In general, sinks closer to the source adopts a smaller code length than that of sinks far away. ARCNC adapts to unknown network topologies without prior knowledge, and allows convolutional codes with different code lengths to coexist in different portions of the network, leading to reduction in decoding delay and memory overheads associated with using a pre-determined field size or code length.

Concurrently to \cite{guo2011localized}, Ho et al. proposed a variable length CNC \cite{ho2010universal} and provided a mathematical proof to show that the overall error probability of the code can be bounded when each intermediate node chooses its code length from a range estimated from its depth. The encoding process involves a graph transformation of the network into a ``low-degree'' form, with each node having a degree of at most 3. Our work differs from \cite{ho2010universal} in that our approach uses feedbacks algorithmically.

In this paper, we first describe the ARCNC algorithm in acyclic and
cyclic networks and show that ARCNC converges in a finite amount of
time with probability 1. We then provide several examples to
illustrate the decoding delay and memory gains ARCNC offers in
deterministic and random networks. Our first example is $n\choose m$
combination networks. Ngai and Yeung have previously pointed out
that throughput gains of network coding can be unbounded over
combination networks \cite{ngai2004network}. Our analysis shows that
the average decoding delay is bounded by a constant when $m$ is
fixed and $n$ increases. In other words, the decoding delay gain
becomes infinite as the number of intermediate nodes increases in a
combination network. On the other hand, our numerical simulation
shows that the decoding delay increases sublinearly when $m=n/2$ and
$n$ increases in value. We then consider a family of networks
defined as \emph{sparsified combination networks} to illustrate the
effect of interdependencies among sinks and depth of the network on
memory use. For cyclic networks, we consider the shuttle network as
an example. We also extend the application of ARCNC from structured
cyclic and acyclic networks to random geometric graphs, where we
provide empirical illustration of the benefits of ARCNC.

The remainder of this paper is organized as follows: the ARCNC algorithm is proposed in Section~\ref{sec:algorithm}; performance analysis is given in Section~\ref{sec:analysis}. The coding delay and memory advantages of ARCNC are discussed for combination and shuttle networks in Section~\ref{sec:examples}. Numerical results are provided in Section~\ref{sec:simulations} for combination and random networks. Section~\ref{sec:conclusion} concludes the paper.

\section{Adaptive Randomized Convolutional Network Codes}\label{sec:algorithm}
\subsection{Basic Model and Definitions}\label{sec:basicDefs}

We model a communication network as a finite directed multigraph,
denoted by $\mathcal{G}=(\mathcal{V},\mathcal{E})$, where
$\mathcal{V}$ is the set of nodes and $\mathcal{E}$ is the set of
edges. An edge represents a noiseless communication channel with
unit capacity. We consider the single-source multicast case, i.e.,
the source sends the same messages to all the sinks in the network.
The source node is denoted by $s$, and the set of $d$ sink nodes is
denoted by $R=\{r_1,\ldots,r_d\} \subset \mathcal{V}$. For every
node $v \in \mathcal{V}$, the sets of incoming and outgoing channels
to $v$ are $In(v)$ and $Out(v)$; let $In(s)$ be the empty set
$\emptyset$. An ordered pair $(e',e)$ of edges is called an
\emph{adjacent pair} when there exists a node $v$ with $e'\in In(v)$
and $e\in Out(v)$. Since edges are directed, we use the terms edge
and arc interchangeably in this paper.

The symbol alphabet is represented by a base field, $\mathbb{F}_q$.
Assume $s$ generates a source \emph{message} per unit time,
consisting of a fixed number of $m$ source \emph{symbols}
represented by a size $m$ \emph{row} vector $x_t
=(x_{1,t},x_{2,t},\cdots,x_{m,t})$, $x_{i,t}\in\mathbb{F}_q$. Time
$t$ is indexed from 0, with the $(t+1)$-th message is
generated at time $t$. The source messages can be collectively
represented by a power series $x(z)=\sum_{t\geq 0}{x_t z^{t}}$,
where $x_t$ is the message generated at time $t$ and $z$ denotes a
unit-time delay. $x(z)$ is therefore a row vector of polynomials
from the ring $\mathbb{F}_q[z]$.

Denote the data propagated over a channel $e$ by $y_e(z)=\sum_{t\geq
0}{y_{e,t} z^{t}}$, where $y_{e,t}\in \mathbb{F}_q$ is the data
symbol sent on edge $e$ at time $t$. For edges connected to the
source, let $y_e(t)$ be a linear function of the source messages,
i.e., for all $e\in Out(s)$, $y_e(z) = x(z)f_e(z)$, where
$f_e(z)=\sum_{t\geq0}f_{e,t}z^{t}$ is a size $m$ column vector of
polynomials from $\mathbb{F}_q[z]$. For edges not connected directly
to the source, let $y_e(z)$ be a linear function of data transmitted
on incoming adjacent edges $e'$, i.e., for all $v \neq s$, $e\in
Out(v)$,
\begin{align}
  y_e(z) = \sum_{e' \in In(v)}{k_{e',e}(z)y_{e'}(z)}\,. \label{eq:yet_k}
\end{align}
Both $k_{e',e}(z)$ and $y\,_e(z)$ are in $\mathbb{F}_q[z]$. Define
$k_{e',e}(z)= \sum_{t\geq0}k_{e', e, t}z^{t}$ as the \emph{local
encoding kernel} over the adjacent pair $(e',e)$, where
$k_{e',e,t}\in \mathbb{F}_q$. Thus, for all $e\in \mathcal{E}$,
$y_e(z)$ is a linear function of the source messages, \begin{align}
   y_e(z)= x(z) f_e(z)\,, \label{eq:yet_f}
\end{align}
\noindent where $f_e(z)=\sum_{t\geq0}f_{e,t}z^{t}$ is the size $m$ \emph{column} vector defined as the \emph{global encoding kernel} over channel $e$, and for all $v \neq s$, $e\in Out(v)$,
\begin{align}
  f_e(z) & = \sum_{e' \in In(v)}{k_{e',e}(z)f_{e'}(z)}\label{eq:fe}\,, \\
\text{i.e.\,,} \quad \quad f_{e,t} & = \sum_{e'\in In(v)}\left(\sum_{i=0}^t k_{e',e,i}f_{e',t-
i}\right)\,. \label{eq:fet}
\end{align}
Note that $f_{e,t}\in \mathbb{F}_q^m$, and $f_{e'}(z), f_e(z)\in
\mathbb{F}_q^m[z]$. Expanding Eq.~\eqref{eq:yet_k} term by term
gives an explicit expression for each data symbol $y_{e,t}$
transmitted on edge $e$ at time $t$, in terms of source symbols and
global encoding kernel coefficients:
\begin{align}
y_{e,t} & = \sum_{e'\in In(v)}\left(\sum_{i=0}^t k_{e',e,i}y_{e',t-
i}\right) = \sum\limits_{i=0}^t{x_{t-i}f_{e,i}}\,. \label{eq:yet}
\end{align}
Each intermediate node $v\neq s$ is therefore required to store in
its memory received data symbols $y_{e',t-i}$ for values of $i$ at
which $k_{e',e,i}$ is non-zero. The design of a CNC is the process of determining local encoding kernel coefficients $k_{e',e,t}$ for all adjacent pairs $(e',e)$, and $f_{e,t}$ for $e\in Out(s)$, such that the original source messages
can be decoded correctly at the given set $R$ of sink nodes. With a
random linear code, these coding kernel coefficients are chosen
uniformly randomly from the finite field $\mathbb{F}_q$. This paper
studies an adaptive scheme where kernel coefficients are generated
one at a time until decodability is achieved at all sinks.

Collectively, we call the $|In(v)|\times|Out(v)|$ matrix
$K_v(z)=(k_{e',e}(z))_{e'\in In(v), e\in
Out(v)}=K_{v,0}+K_{v,1}z+K_{v,2}z^2+\ldots$ the \emph{local encoding
kernel matrix} at node $v$, and the $m \times |In(v)|$ matrix
$F_v(z)=(f_{e}(z))_{e\in In(v)}$ the \emph{global encoding kernel
matrix} at node $v$. Observe from Eq.~\eqref{eq:yet_f} that, at sink
$r$, $F_{r}(z)$ is required to deconvolve the received data messages
$y_{e^\prime_i}(z)$, $e^\prime_i \in In(r)$. Therefore, each
intermediate node $v$ computes $f_e(z)$ for outgoing edges from
$F_v(z)$ according to Eq.\eqref{eq:fe}, and sends $f_e(z)$ along
edge $e$, together with data $y_e(z)$. This can be achieved by
arranging the coefficients of $f_e(z)$ in a vector form and
attaching them to the data. In this paper, we ignore the effect of
this overhead transmission of coding coefficients on throughput or
delay: we show in Section~\ref{sec:stoppingTime} that the number of
terms in $f_e(z)$ is finite, thus the overhead can be amortized over
a long period of data transmissions.

Moreover, $F_v(z)$ can be written as
$F_v(z)=F_{v,0}+F_{v,1}z+\cdots+F_{v,t}z^{t}$, where $F_{v,t}\in
\mathbb{F}_q^{m\times In(v)}$ is the global encoding kernel matrix
at time $t$. $F_v(z)$ can thus be viewed as a polynomial, with
$F_{v,t}$ as matrix coefficients. Let $L_v$ be the degree of
$F_v(z)$. $L_v+1$ is a direct measure of the amount of memory
required to store $F_v(z)$. We shall define in
Section~\ref{sec:memory} the metric used to measure memory overhead
of ARCNC.

\subsection{Algorithm for Acyclic Networks}\label{sec:algAcyclic}

\subsubsection{Code Generation and Data Encoding} initially, all local and global encoding kernels are set to 0. At time $t$, the $(t+1)$-th coefficient $k_{e',e,t}$ of the local encoding kernel $k_{e',e}(z)$ is chosen uniformly randomly from $\mathbb{F}_q$ for each adjacent pair $(e',e)$, independently from other kernel coefficients. Each node $v$ stores the local encoding kernels and forms the outgoing data symbol as a random linear combination of incoming data symbols in its memory according to  Eq.~\eqref{eq:yet}. Node $v$ also stores the global encoding kernel matrix $F_v(z)$ and computes  the global encoding kernel $f_{e}(z)$, in the form of a vector of coding coefficients, according to Eq.~\eqref{eq:fe}.
During this code construction process, $f_e(z)$ is attached to the data transmitted on $e$. Once code generation terminates and the CNC $F_r(z)$ is known at each sink $r$, $f_e(z)$ no longer needs to be forwarded, and only data symbols are sent on each outgoing edge. Recall that we ignore the reduction in rate due to the transmission
of coding coefficients, since this overhead can be amortized over long periods of data transmissions.

In acyclic networks, a complete topological order exists among the nodes, starting from the source. Edges can be ranked such that coding can be performed sequentially, where a downstream node encodes after all its upstream nodes have generated their coding coefficients. Observe that we have not assumed non-zero transmission delays.

\subsubsection{Testing for Decodability and Data Decoding} \label{subsubsec:decoding}
at every time instant $t$, each sink $r$ decides whether its global encoding kernel matrix $F_r(z)$ is full rank. If so, it sends an ACK signal to its parent node. An intermediate node $v$ which has received ACKs from all its children at time $t_0$ will send an ACK to its parent, and set all subsequent local encoding kernel coefficients $k_{e',e,t}$ to $0$ for all $t>t_0$, $e' \in In(v)$, and $e\in Out(v)$. In other words, the constraint lengths of the
local convolutional codes increase until they are sufficient for downstream sinks to decode successfully. Such automatic adaptation eliminates the need for estimating the field size or the constraint length a priori. It also allows nodes within the network to operate with different constraint lengths as needed.

If $F_r(z)$ is not full rank, $r$ stores received messages and waits for more data to arrive. At time $t$, the algorithm is considered successful if all sinks can decode. This is equivalent to saying that the determinant of $F_r(z)$ is a non-zero polynomial. Recall from Section~\ref{sec:basicDefs}, $F_r(z)$ can be written as $F_r(z)=F_{r,0}+F_{r,1}z+\cdots+F_{r,t}z^{t}$, where $F_{r,t}$ is
the global encoding kernel matrix at time $t$. Computing the determinant of $F_r(z)$ at every time instant $t$ is complex, so we test instead the following two conditions, introduced in \cite{CG2009} and \cite{MS1968} to determine decodability at a sink $r$. The first condition is necessary and easy to compute, while the second is both necessary and sufficient, but slightly more complex.

\begin{enumerate}
\item $rank(\widehat{F}_{r,t})=m$, where $\widehat{F}_{r,t}=(F_{r,0} , F_{r,1}, \ldots, F_{r,t})$.
\item $rank(M_{r,t})-rank(M_{r,t-1})=m$, where
\begin{align}
     M_{r,i} = \left( {\begin{array}{*{20}c}
     F_{r,0} & F_{r,1} & \cdots  & F_{r,i} \\
       0     & \ddots  & \ddots  & \vdots  \\
       0     & \cdots  & F_{r,0} & F_{r,1} \\
       0     & \cdots  &    0    & F_{r,0} \\
\end{array}} \right).
\end{align}
\end{enumerate}

Once $F_r(z)$ is full rank, $r$ can perform decoding operations. Let
$T_r$ be the \emph{first decoding time}, or the earliest time at
which the decodability conditions are satisfied. Denote by
${x}_0^{T_r}$ and ${y}_0^{T_r}$ the row vectors $(x_0, \cdots,
x_{T_r})$ and $(y_0, \cdots, y_{T_r})$. Each source message $x_t$ is
a size $m$ row vector of source symbols $x_{i,t}\in \mathbb{F}_q$
generated at $s$ at time $t$; each data message $y_t$ is a size
$In(r)$ row vector of data symbols $y_{e,t}\in \mathbb{F}_q$
received on the incoming edges of $r$ at time $t$, $e\in In(r)$.
Hence, ${y}_0^{T_r} = {x}_0^{T_r} M_{T_r}$. To decode, we want to
find a size $In(r)({T_r}+1)\times m$ matrix $D$
such that
$M_{T_r}D=
{I_{m} \choose \textbf{0}} $. We can then recover source message
$x_0$ by evaluating $y_0^{T_r} D= x_0^{T_r}M_{T_r} D= x_0$. Once $D$
is determined, we can decode sequentially the source message $x_t$
at time $t+{T_r}$, $t>0$. Note that if $|In(r)|>m$, we can simplify
the decoding process by using only $m$ independent received symbols
from the $|In(r)|$ incoming edges.

Observe that, an intermediate node $v$ only stops lengthening its
local encoding kernels $k_{e',e}(z)$ when \emph{all} of its
downstream sinks achieve decodability. Thus, for a sink $r$ with
first decoding time ${T_r}$, the length of $F_r(z)$ can increase
even after $T_r$. Recall from Section~\ref{sec:basicDefs} that $L_r$
is the degree of $F_r(z)$. We will show in
Section~\ref{sec:stoppingTime} that ARCNC converges in a finite
amount of time for a multicast connection. In other words, when the
decodability conditions are satisfied at \emph{all} sinks, the
values of $L_r$ and $T_r$ at an individual sink $r$ satisfy the
condition $L_r\geq {T_r}$, where $L_r$ is finite. Decoding of
symbols after time ${T_r}$ can be conducted sequentially. Details of
the decoding operations can be found in \cite{guoLetter2012}.

\subsubsection{Feedback}
As we have described in the decoding subsection, acknowledgments are
propagated from sinks through intermediate nodes to the source to indicate if code length should continue to be increased at coding nodes. ACKs are assumed to be instantaneous and require no dedicated network links, thus incurring no additional delay or throughput costs. Such assumptions may be reasonable in many systems since feedback is only required during the code construction process. Once code length adaptation finishes, ACKs are no longer needed. We show in Section~\ref{sec:stoppingTime} that ARCNC terminates in a finite amount of time. Therefore, the cost of feedback can be amortized over periods of data transmissions.

\subsection{Algorithm Statement for Cyclic Networks}\label{sec:algCyclic}

In an acyclic network, the local and global encoding kernel descriptions of a linear network code are equivalent, in the sense that for a given set of local encoding kernels, a set of global encoding kernels can be calculated recursively in any upstream-to-downstream order. In other words, a code generated from local encoding kernels has a unique solution when decoding is performed on the corresponding global encoding kernels. By comparison, in a cyclic network, partial orderings of edges or nodes are not always consistent. Given a set of local encoding kernels, there may exist a unique, none, or multiple sets of global encoding kernels (\S 3.1, \cite{NWT2005}). If the code is non-unique, the decoding process at a sink may fail. A sufficient condition for a CNC to be successful is that the constant coefficient matrix consisting of all local encoding kernels be nilpotent \cite{KM2003}; this condition is satisfied if we code over an acyclic topology at $t=0$ \cite{CG2009}. In the extreme case, all local encoding kernels can be set to 0 at $t=0$. This setup translates to a unit transmission delay on each link, which as previous work on RLNC has shown, guarantees the uniqueness of a code construction \cite{HMK2006}. To minimize decoding delay, it is intuitive to make as few local encoding kernels zero as possible. In other words, a reasonable heuristic is to assign 0 to a minimum number of $k_{e',e,0}$, $(e',e)\in\mathcal{E}$, and to assign values chosen uniformly randomly from $\mathbb{F}_q$ to the rest. The goal is to guarantee that each cycle contains at least a single delay.

Although seemingly similar, this process is actually not the same as the problem of finding the minimal feedback edge set. A feedback edge set is a set containing at least one edge of every cycle in the graph. When a feedback edge set is removed, the graph becomes an acyclic directed graph. In our setup, however, since $k_{e',e,0}$ is specific to an adjacent edge pair, $k_{e',e,0}$ does not need to be 0 for all $e'$ where $(e',e) \in \mathcal{V}$.

\begin{figure}[t!]
  \centering
  \includegraphics[width=7.5cm]{./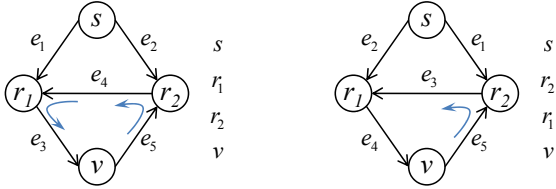}\\
  \caption{A sample cyclic network with edges numerically indexed. The set of indices is not unique, and depends on the order at which nodes are visited, starting from $s$. On the left, $r_1$ is visited before $r_2$; on the right, $r_2$ is visited before $r_1$. In each case, $(e',e)$ is highlighted with a curved arrow if $e'\succeq e$.}\label{fig:cyclicExample}
\end{figure}
For example, a very simple but not necessarily delay-optimal heuristic is to index all edges, and to assign 0 to $k_{e',e,0}$ if $e'\succeq e$, i.e., when $e'$ has an index larger than $e$; $k_{e',e,0}$ is chosen randomly from $\mathbb{F}_q$ if $e' \prec e$. Fig.~\ref{fig:cyclicExample} illustrates this indexing scheme. A node is considered to be visited if one of its incoming edges has been indexed; a node is put into a queue once it is visited. For
each node removed from the queue, numerical indices are assigned to
all of its outgoing edges. Nodes are traversed starting from the
source $s$. The outgoing edges of $s$ are therefore numbered from 1
to $|Out(s)|$. Note that the index set thus obtained is not
necessarily unique. In this particular example, we can have two
sets of edge indices, as shown in Fig.~\ref{fig:cyclicExample}. Here $r_1$ is visited before $r_2$ on the left, and vice versa on the right. In each case, an adjacent
pair $(e',e)$ is highlighted with a curved arrow if $e'\succeq e$.
At $t=0$, we set $k_{e',e,0}$ to 0 for such highlighted adjacent
pairs, and choose $k_{e',e,0}$ uniformly randomly from
$\mathbb{F}_q$ for other adjacent pairs.

Observe that, in an acyclic network, this indexing scheme provides a total ordering for the nodes as well as for the edges: a
node is visited only after all of its parents and ancestors are
visited; an edge is indexed only after all edges on any of its paths
from the source are indexed. In a cyclic network, however, an order
of nodes and edges is only partial, with inconsistencies around each
cycle. Such contradictions in the partial ordering of edges make the
generation of unique network codes along each cycle impossible. By
assigning 0 to local encoding kernels $k_{e',e,0}$ for which
$e'\succeq e$, such inconsistencies can be avoided at time 0, since
the order of $e'$ and $e$ becomes irrelevant in determining the
network code. After the initial step, $k_{e',e,t}$ is not
necessarily 0 for $e'\succeq e$, $t>0$, nonetheless the convolution
operations at intermediate nodes ensure that the 0 coefficient
inserted at $t=0$ makes the global encoding kernels unique at the
sinks. This idea can be derived from the expression for
$f_{e,t}$ given in Eq.~\eqref{eq:fet}. In each cycle, there is at
least one $k_{e',e,0}$ that is equal to zero. The corresponding
$f_{e',t}$ therefore does not contribute to the construction of
other $f_{e,t}$'s in the cycle. In other words, the partial ordering
of arcs in the cycle can be considered consistent at $t=1$ and later
times.

Although this heuristic for cyclic networks is not optimal, it is universal. One disadvantage of this approach is that full knowledge of the topology is required at $t=0$, making the algorithm centralized instead of entirely distributed. Nonetheless, if inserting an additional transmission delay on each link is not an issue, we can always bypass this code assignment stage by zeroing all local encoding kernels at $t=0$.  

After initialization, the algorithm proceeds in exactly the same way as in the acyclic case.

\section{Analysis}\label{sec:analysis}
\subsection{Success probability}

Discussions in \cite{HMK2006, LYC2003, KM2003} state that in a network with delays, ANC gives rise to random processes which can be written algebraically in terms of a delay variable $z$. Thus, a convolutional code can naturally evolve from message propagation and linear encoding. ANC in the delay-free case is therefore equivalent to CNC with constraint length 1. Similarly, using a CNC with constraint length $l>1$ on a delay-free network is equivalent to performing ANC on the same network, but with $l-1$ self-loops attached to each encoding node. Each self-loop carries $z, z^2, \ldots, z^{l-1}$ units of delay respectively.

\begin{figure}[t!]
  \centering
  \includegraphics[width=3.5in]{./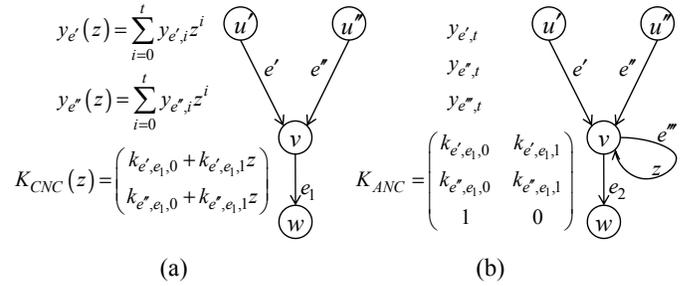}
 \caption{A convolution code resulting from self-loops on a network with transmission delays. (a) CNC in a delay-free network. Data transmitted on incoming edges are $y_{e'}(z)$ and $y_{e}(z)$ respectively. The local encoding kernels are given by $K_{\text{CNC}}(z)$. (b) Equivalent ANC in a network with delays. The given self-loop carries a single delay $z$. Incoming data symbolsare $y_{e',t}$ and $y_{e'',t}$ at time $t$. The ANC coding coefficients are given by the matrix $K_{\text{ANC}}$.}\label{fig:ANC_CNC}
\end{figure}

For example, in Fig.~\ref{fig:ANC_CNC}, we show a node with two incoming edges. Let the data symbol transmitted on edge $\dot{e}$ at time $t$ be $y_{\dot{e},t}$. A CNC with length $l=2$ is used in (a), assuming that transmissions are delay-free. The local encoding kernel matrix $K_{\text{CNC}}(z)$ contains two polynomials, $k_{e',e_1}(z) = k_{e',e_1,0}+k_{e',e_1,1}z$ and $k_{e'',e_1}(z) = k_{e'',e_1,0}+k_{e'',e_1,1}z$. According to Eq.~\eqref{eq:yet_k} and \eqref{eq:yet}, the data symbol transmitted on $e_1$ at time $t$ is
\begin{align}
y_{e_1,t} & = \sum_{\dot{e}\in\{e',e''\}} y_{\dot{e},t}k_{\dot{e},e_1,0}+y_{\dot{e},t-1}k_{\dot{e},e_1,1}\,. \label{ye1tCNC}
\end{align}
In (b), the equivalent ANC is shown. A single loop with a transmission delay of $z$ has been added, and the local encoding kernel matrix $K_{\text{ANC}}=(k_{\dot{e},e})_{\dot{e}\in In(v),e\in Out(v)}$ is constructed from coding coefficients from (a). The first column of $K_{\text{ANC}}$ represents encoding coefficients from incoming edges $e', e'', e'''$ to the outgoing edge $e_2$, and the second column represents encoding coefficients from incoming edges $e', e'', e'''$ to the outgoing edge $e'''$. Using a matrix notation, the output data symbols from $v$
are $(y_{e_2,t} \quad y_{e''',t}) = (y_{e',t} \quad y_{e'',t} \quad
y_{e''',t-1})K_{\text{ANC}} $, i.e.,
\begin{align}
    y_{e''',t} & = y_{e',t}k_{e',e'''} +y_{e'',t}k_{e'',e'''}+y_{e''',t}0\notag \\
               & = y_{e',t}k_{e',e_1,1} +y_{e'',t}k_{e'',e_1,1}\notag \\
    y_{e_2,t}& = y_{e',t}k_{e',e_2}+y_{e'',t}k_{e'',e_2}+y_{e''',t-1}k_{e''',e_2} \notag \\
             & = y_{e',t}k_{e',e_1,0}+y_{e'',t}k_{e'',e_1,0}+y_{e''',t-1} \notag \\
    & = \sum_{\dot{e}\in\{e',e''\}} y_{\dot{e},t}k_{\dot{e},e_1,0}+y_{\dot{e},t-1}k_{\dot{e},e_1,1} \label{ye2tANC}
\end{align}
Clearly $y_{e_1,t}$ is equal to $y_{e_2,t}$. ARCNC therefore falls into the framework given by Ho et~al. \cite{HMK2006}, in the sense that the convolution process either
arises naturally from cycles with delays, or can be considered as
computed over self-loops appended to acyclic networks. Applying the
analysis from \cite{HMK2006}, we have the following theorem,

\begin{theorem}\label{thm:prob}
For multicast over a general network with $d$ sinks, the ARCNC algorithm over $\mathbb{F}_q$
can achieve a success probability of at least $(1-d/q^{t+1})^{\eta}$ at time $t$, if $q^{t+1}>d$, and $\eta$ is the number of links with
random coefficients.
\end{theorem}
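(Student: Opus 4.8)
The plan is to reduce the statement to the random linear network coding analysis of Ho et al.~\cite{HMK2006}, using the equivalence established above: at time $t$ the ARCNC code is a convolutional code of constraint length $t+1$, which by the self-loop construction (or, for cyclic networks, by the cycles-with-delays viewpoint together with the $t=0$ nilpotency condition of Section~\ref{sec:algCyclic}) is exactly an algebraic linear network code over $\mathbb{F}_q$ in a network with delays, hence falls within the framework of~\cite{HMK2006}. The one new bookkeeping ingredient is how the per-link randomness accumulates over time: by time $t$ each of the $\eta$ links carrying random coefficients has drawn $k_{e',e,0},\dots,k_{e',e,t}$ i.i.d.\ uniformly from $\mathbb{F}_q$, and these are the only random choices that can affect decodability at time $t$, since $y_{e,t}$ depends causally only on kernel coefficients of index at most $t$ (Eq.~\eqref{eq:yet}).

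First I would invoke feasibility of the multicast connection (min-cut at least $m$ to every sink), which by the algebraic network coding existence results of~\cite{LYC2003} guarantees a valid deterministic convolutional solution; equivalently, for each sink $r$ the determinant underlying the decodability conditions of Section~\ref{subsubsec:decoding} is not the identically-zero polynomial in the indeterminates $\{k_{e',e,i}\}$. Next I would observe that, for the purpose of this determinant test, a convolutional code of constraint length $t+1$ over $\mathbb{F}_q$ behaves like a block linear network code over an alphabet of size $q^{t+1}$: one packs the $t+1$ free coefficients $(k_{e',e,0},\dots,k_{e',e,t})$ of a link into a single symbol, so that drawing them i.i.d.\ uniformly over $\mathbb{F}_q$ is the same as drawing one symbol uniformly from a set of size $q^{t+1}$. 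Running the edge-by-edge induction of~\cite{HMK2006} with the field size $q$ replaced by $q^{t+1}$, each sink contributes a bad set of relative size at most $1/q^{t+1}$ per random link, a combined bad fraction of at most $d/q^{t+1}$ per link; inducting over the $\eta$ random links then gives success probability at least $(1-d/q^{t+1})^{\eta}$, which is a nontrivial bound precisely when $q^{t+1}>d$.

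The main obstacle I anticipate is making the ``block code over $\mathbb{F}_{q^{t+1}}$'' step rigorous: truncating the kernels at degree $t$ puts the coefficients in the ring $\mathbb{F}_q[z]/(z^{t+1})$, which is not a field (it has zero divisors), so the argument of~\cite{HMK2006} cannot be quoted verbatim over it. I see two routes around this. One is to re-run the inductive argument of~\cite{HMK2006} directly in terms of the $\mathbb{F}_q$-coefficients: the relevant quantity is a polynomial in $z$ whose $z$-coefficients are polynomials in the $k_{e',e,i}$, decodability at time $t$ is a condition on finitely many low-order $z$-coefficients, and — because each kernel $k_{e',e}(z)$ enters the transfer matrix linearly — these $z$-coefficients have controlled degree in each link's coefficient block, which is precisely the structure the induction exploits. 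The other route is to work over a genuine degree-$(t+1)$ extension field $\mathbb{F}_{q^{t+1}}\cong\mathbb{F}_q[z]/(p(z))$ with $p$ irreducible, apply~\cite{HMK2006} there to obtain decodability over $\mathbb{F}_{q^{t+1}}$, and then transfer the conclusion back to the $z$-truncated convolutional description, checking that the Schwartz--Zippel-type counting does not depend on which size-$q^{t+1}$ model of the coefficient alphabet is used. A secondary point to verify is that the cyclic-network initialization (zeroing enough $k_{e',e,0}$ so that every cycle carries a delay) does not destroy the feasibility used above; this holds because that initialization is exactly the constant-matrix nilpotency condition under which~\cite{HMK2006,CG2009} already guarantee a unique solvable code.
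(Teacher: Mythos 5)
Your proposal is, at its core, the paper's own argument, with the central gap made explicit rather than glossed over. The published proof just groups the coefficients $(k_{e',e,0},\dots,k_{e',e,t})$ on each random link, asserts that the resulting length-$(t+1)$ vector ``corresponds to a random element over the extension field $\mathbb{F}_{q^{t+1}}$,'' and cites Ho et al.\ to read off $(1-d/q^{t+1})^{\eta}$; your plan performs the same reduction. What you add, correctly, is the observation that this correspondence is only set-theoretic: the arithmetic that actually feeds the decodability test is truncated convolution, i.e.\ multiplication in $\mathbb{F}_q[z]/(z^{t+1})$, which for $t\ge 1$ has zero divisors and is not isomorphic to $\mathbb{F}_{q^{t+1}}$, so the field-based edge-by-edge counting of Ho et al.\ cannot be invoked verbatim. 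That is a genuine weakness of the one-line proof the paper gives, and you are right to flag it.

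Of your two repairs, the first --- redo the edge-by-edge Schwartz--Zippel induction directly over the $\eta(t+1)$ scalar $\mathbb{F}_q$-variables, treating each link's $(t+1)$-block as one inductive step and verifying that the product-of-determinants polynomial underlying the decodability test has per-block degree at most $d$ --- is the right route, and is essentially what a rigorous version of the theorem requires; you sketch it but do not carry out the degree bookkeeping, which is where the real work lies. The second repair is weaker than you suggest: working over the field $\mathbb{F}_{q^{t+1}}\cong\mathbb{F}_q[z]/(p(z))$ with $p$ irreducible bounds a different bad event (non-invertibility of a block transfer matrix over that field) than the one ARCNC needs (failure of the rank conditions on the $z^{t+1}$-truncated convolutional transfer matrix), and those two bad sets are not canonically in bijection merely because the ambient alphabets both have $q^{t+1}$ elements, so the ``transfer back'' step would need its own argument, not just a check.
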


\begin{proof} At node $v$, $k_{e',e}(z)$ at time $t$ is a polynomial with maximal degree $t$, i.e.,  $k_{e',e}(z)=k_{e',e,0}+k_{e',e,1}z+\cdots+k_{e',e,t}z^{t}$,  $k_{e',e,i}$ is randomly chosen over $\mathbb{F}_q$. If we group the  coefficients, the vector $k_{e',e}=\{k_{e',e,0},k_{e',e,1},\cdots,k_{e',e,t}\}$ is of
length $t+1$, and corresponds to a random element over the extension field
 $\mathbb{F}_{q^{t+1}}$. Using the result in \cite{HMK2006}, we conclude that the success probability of ARCNC at time $t$ is at least $(1-d/q^{t+1})^{\eta}$, as long as $q^{t+1}>d$.
\end{proof}

We could similarly consider the analysis done by Balli et al. \cite{BYZ2009}, which states that the success probability is at least
$(1-d/(q-1))^{|J|+1}$, $|J|$ being the number of encoding nodes, to show that a tighter lower bound can be given on the success probability
of ARCNC, when $q^{t+1}>d$.

\subsection{First decoding time}\label{sec:stoppingTime}
As discussed in Section~\ref{subsubsec:decoding}, we define the \emph{first decoding time} $T_r$ for sink $r$, $1\leq r\leq d$, as the time it takes $r$ to achieve decodability for the first time. We had called this variable the \emph{stopping time} in \cite{guo2011localized}. Also recall that when all sinks are able to decode, at each sink $r$, $T_r$ can be smaller than $L_r$, the degree of the global encoding kernel matrix $F_r(z)$. Denote by $T_N$ the time it takes for all sinks in the network to successfully decode, i.e., $T_N=\max\{T_1,\ldots,T_d\}$, then $T_N$ is also equal to $\max\{L_1,\ldots,L_d\}$. The following corollary holds:

\begin{corollary}
For any given $0<\varepsilon<1$, there exists a $T_0>0$ such that for any $t \geq T_0$, ARCNC solves the multicast problem with probability
at least $1-\varepsilon$, i.e., $P(T_N > t)  < \varepsilon$.\label{crlry2}
\end{corollary}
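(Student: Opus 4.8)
The plan is to deduce the corollary directly from Theorem~\ref{thm:prob} together with the elementary fact that $\{T_N \le t\}$ is a nested (monotone) family of events. First I would pin down the correspondence between the event whose probability Theorem~\ref{thm:prob} bounds and the event $\{T_N \le t\}$. By the decodability test of Section~\ref{subsubsec:decoding}, ``ARCNC solves the multicast problem at time $t$'' means that all $d$ sinks simultaneously satisfy conditions~1 and~2 using encoding kernels of degree at most $t$. Whenever this occurs, each sink's \emph{first} decoding time $T_r$ is by definition at most $t$, hence $T_N = \max_r T_r \le t$; so the success event at time $t$ is contained in $\{T_N \le t\}$. Combining this inclusion with Theorem~\ref{thm:prob} gives, for every $t$ with $q^{t+1} > d$,
\begin{align}
P(T_N > t) \;\le\; 1 - \left(1 - \frac{d}{q^{t+1}}\right)^{\eta}. \label{eq:crlrybound}
\end{align}

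Next I would examine the right-hand side of~\eqref{eq:crlrybound} as $t \to \infty$. Writing $g(t) := 1 - (1 - d/q^{t+1})^{\eta}$, we have $d/q^{t+1} \to 0$ and $\eta$ finite, so $g(t) \to 0$; moreover $g$ is (strictly) decreasing in $t$ once $q^{t+1} > d$. Therefore, given $\varepsilon \in (0,1)$, I can take $T_0$ to be any finite positive integer large enough that $g(T_0) < \varepsilon$ --- explicitly, any $T_0$ with $q^{T_0+1}\bigl(1 - (1-\varepsilon)^{1/\eta}\bigr) > d$, e.g.\ $T_0 = \bigl\lceil \log_q d - \log_q\!\bigl(1 - (1-\varepsilon)^{1/\eta}\bigr) \bigr\rceil$ (enlarged to $1$ if this is nonpositive). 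For any $t \ge T_0$, the trivial inclusion $\{T_N > t\} \subseteq \{T_N > T_0\}$ (equivalently, monotonicity of $g$) yields $P(T_N > t) \le g(t) \le g(T_0) < \varepsilon$, which is exactly the claim. As a byproduct, letting $\varepsilon \downarrow 0$ shows $P(T_N = \infty) = 0$, i.e.\ ARCNC terminates in a finite time almost surely, which is the convergence statement promised in Section~\ref{sec:stoppingTime}.

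I do not expect a genuine obstacle: the whole argument is Theorem~\ref{thm:prob} plus one limit. The only point that deserves a careful sentence is the bookkeeping in the first step --- making sure that the event underlying Theorem~\ref{thm:prob} (the random code being a valid multicast code when its degree-$\le t$ kernels are read as elements of $\mathbb{F}_{q^{t+1}}$) is indeed contained in $\{T_N \le t\}$, and that Theorem~\ref{thm:prob}'s hypothesis $q^{t+1} > d$ holds for all sufficiently large $t$, so that~\eqref{eq:crlrybound} is available for every $t \ge T_0$ after possibly enlarging $T_0$. Everything past that is immediate.
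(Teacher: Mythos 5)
Your proposal is correct and takes essentially the same route as the paper: invoke Theorem~\ref{thm:prob} to bound $P(T_N>t)$ by $1-(1-d/q^{t+1})^{\eta}$, pick $T_0$ from the corresponding logarithmic inequality so that this bound falls below $\varepsilon$, and then use monotonicity in $t$. You are in fact a bit more careful than the paper's own two-line proof, whose displayed chain $P(T_N>t)\le P(T_N>T_0)<1-(1-d/q^{t+1})^{\eta}$ has the middle inequality running the wrong way for $t>T_0$ (it should read $q^{T_0+1}$, not $q^{t+1}$); your version, which compares $g(t)\le g(T_0)$ using the monotonicity of $g$ explicitly and keeps track of the hypothesis $q^{t+1}>d$, avoids that slip.
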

\proof Let $T_0  = \left\lceil \log_q d-\log_q(1-\sqrt[\eta]{1-\epsilon}) \right\rceil -1 $, then $T_0 + 1 \geq \lceil \log_q d \rceil$ since
$0 < \varepsilon <1$, and $(1-d/q^{T_0+1})^\eta>1-\varepsilon$. Applying Theorem~\ref{thm:prob} gives $P(T_N > t) \leq P(T_N >T_0) <
1-(1-d/q^{t+1})^{\eta}< \varepsilon$ for any $t\geq T_0$,
\endproof

Since $ Pr\{ \cup_{i=t}^{\infty}[T_N \le t]\}=1-Pr\{\cap
_{i=t}^{\infty}[T_N>t]\} 
>1-\varepsilon$,
Corollary~\ref{crlry2} shows that ARCNC converges and stops in a
finite amount of time with probability 1 for a multicast
connection.

Another relevant measure of the performance of ARCNC is the
\emph{average first decoding time},
$T_{\text{avg}}=\frac{1}{d}\sum_{r=1}^{d}T_r$. Observe that
$E[T_{\text{avg}}] \leq E[T_N]$, where
\begin{align*}
  E[T_N] 
   & = \sum_{t=1}^{\lceil \log_q d\rceil -1}P(T_N\geq t) + \sum_{t= \lceil \log_q d\rceil}^{\infty}{P(T_N\geq t)}\\
         & \le \lceil \log_q d\rceil -1 + \sum_{t=\lceil \log_q d\rceil}^{\infty}[1-(1-\frac{d}{q^t})^{\eta}] \\
         & = \lceil \log_q d\rceil -1 + \sum_{k=1}^{\eta}(-1)^{k-1}{\eta \choose                                                         k} \frac{d^k}{q^{\lceil \log_q d\rceil k}-1}\,.
\end{align*}
When $q$ is large, the summation term approximates $1-(1-d/q)^\eta$ by the
binomial expansion.  Hence as $q$ increases, the second term above
decreases to $0$, while the first term $\lceil \log_q d\rceil -1$ is
0. $E[T_{\text{avg}}]$ is therefore upper-bounded by a term
converging to 0; it is also lower bounded by 0 because at least one
round of random coding is required. Therefore, $E[T_{\text{avg}}]$
converges to $0$ as $q$ increases.  In other words, if the field
size is large enough, ARCNC reduces in effect to RLNC.

Intuitively, the average first decoding time of ARCNC depends on the network topology. In RLNC, all nodes are required in code in finite fields of the same size; thus the effective field size is determined by the worst case sink. This scenario corresponds to having all nodes stop at $T_N$ in ARCNC. ARCNC enables each node to decide locally what is a good constraint length to use, depending on side information from downstream nodes. Since $E[T_{\text{avg}}]\leq E[T_N]$, some nodes may be able to decode before $T_N$. The corresponding effective field size is therefore expected to be smaller than in RLNC. Two
possible consequences of a smaller effective field size are reduced decoding delay, and reduced memory requirements. In Section~\ref{sec:examples}, we confirm through simulations that such gains can be attained by ARCNC.

\subsection{Memory}\label{sec:memory}

To measure the amount of memory required by ARCNC, first recall
from Section~\ref{sec:basicDefs} that at each node $v$, the global
encoding kernel matrix $F_v(z)$, the local encoding kernel matrix
$K_v(z)$, and past data $y_{e'}(z)$ on incoming arcs $e'\in In(v)$
need to be stored in memory. $K_v(z)$ and $y_{e'}(z)$ should always
be saved because together they generate new data symbols to transmit
(see Eqs.~\eqref{eq:yet_k} and \eqref{eq:yet}). $F_v(z)$ should be
saved during the code construction process at intermediate nodes,
and always at sinks, since they are needed for decoding.

Let us consider individually the three contributors to memory use.
Firstly, recall from Section~\ref{sec:basicDefs} that $F_v(z)$ can
be viewed as a polynomial in $z$. When all sinks are able to decode,
at node $v$, $F_v(z)$ has degree $L_v$, with
coefficients from $\mathbb{F}_q^{m\times In(v)}$. The total amount
of memory needed for $F_v(z)$ is therefore proportional to
$\lceil\log_2 q\rceil m In(v) (L_v+1)$. Secondly, from
Eq.~\eqref{eq:fet}, we see that the length of a local encoding
kernel polynomial $k_{e',e}(z)$ should be equal to or smaller than
that of $f_e(z)$. Thus, the length of $K_v(z)$ should also be equal
to or smaller than that of $F_v(z)$. The coefficients of $K_v(z)$
are elements of $\mathbb{F}_q^{In(v)\times Out(v)}$. Hence, the
amount of memory needed for $K_v(z)$ is proportional to
$\lceil\log_2 q\rceil Out(v) In(v) (L_v+1)$. Lastly, a direct
comparison between Eqs.~\eqref{eq:fet} and \eqref{eq:yet} shows that
memory needed for $y_{e'}(z)$, $e'\in In(v)$ is the same for that
needed for $F_v(z)$. In practical uses of network coding, data can
be transmitted in packets, where symbols are concatenated and
operated upon in parallel. Packets can be very long in length.
Nonetheless, the exact packet size is irrelevant for
comparing memory use between different network codes, since all
comparisons are naturally normalized to packet lengths.

Observe that, $m$ is the number of symbols in the source message,
determined by the min-cut of the multicast connection, independent
of the network code used. Similarly, $In(v)$ and $Out(v)$ are
attributes inherent to the network topology. To compare the memory
use of different network codes, we can omit these terms, and define
the average memory use of ARCNC by the following common factor:
\begin{align}
W_{\text{avg}} \triangleq \frac{\lceil\log_2 q\rceil}{|\mathcal{V}|}\sum_{v\in\mathcal{V}}(L_v+1)\,.\label{eq:Wavg}
\end{align}
\noindent In RLNC, $L_v=0$, and the expression simplifies to $\lceil\log_2 q\rceil$, which is the amount of memory needed for a single finite field element.

One point to keep in mind when measuring memory use is that even after a sink achieves decodability, its code length can still increase, as long as at least one of its ancestors has not stopped increasing code length. We say a non-source node $v$ is \emph{related} to a sink $r$ if $v$ is an ancestor of $r$, or if $v$ shares an ancestor, other than the source, with $r$. Hence, $L_r$ is dependent on all nodes related to $r$.
\subsection{Complexity}
To study the computation complexity of ARCNC, first observe
that, once the adaptation process terminates, the computation needed
for the ensuing code is no more than a regular CNC. In fact, the
expected computation complexity is proportional to the average code
length of ARCNC. We therefore omit the details of the complexity
analysis of regular CNC here and refer interested readers to
\cite{EF2004}.

For the adaptation process, the encoding operations are described by
Eq.~\eqref{eq:fet}. If the algorithm stops at time $T_N$, the
number of operations in the encoding steps is
$O(D_{in}|\mathcal{E}|T_N^2m)$, where
$D_{in}=\max_{v\in\mathcal{V}}|In(v)|$.

To determine decodability at a sink $r$, we check if the rank of $F_r(z)$ is $m$. A straight-forward approach is to check whether its determinant is a non-zero
polynomial. Alternatively, Gaussian elimination could be
applied. At time $t$, because $F_r(z)$ is an $m \times |In(r)|$
matrix and each entry is a polynomial with degree $t$, the
complexity of checking whether $F_r(z)$ is full rank is
$O(D_{in}^22^mmt^2)$. Instead of computing the determinant or using
Gaussian elimination directly, we propose to check the conditions
given in Section~\ref{sec:algAcyclic}. For each sink $r$, at time
$t$, determining $rank\left({\begin{array}{*{20}c} F_0 & F_1 &
\cdots F_{t} \end{array}}\right)$ requires $O(D_{in}^2mt^2)$ operations. If the first test passes, we calculate $rank(M_{t})$ and $rank(M_{t-1})$ next. Observe that $rank(M_{t-1})$ was computed during the last iteration. $M_t$ is a
$(t+1)m \times (t+1)|In(r)|$ matrix over field $\mathbb{F}_q$. The
complexity of calculating $rank(M_t)$ by Gaussian elimination is
$O(D_{in}^2mt^3)$. The process of checking decodability is performed
during the adaptation process only, hence the computation complexity
here can be amortized over time after the coding coefficients are
determined. In addition, as decoding occurs symbol-by-symbol, the
adaptation process itself does not impose any additional delays.

\section{Examples}\label{sec:examples}
In this section, we describe the application of ARCNC in three structured networks: the combination and sparsified combination networks, which are acyclic, and the shuttle network, which is cyclic. For the combination network, we bound the expected average first decoding time; for the sparsified combination network, we bound the expected average memory requirement. In addition, the shuttle network is given as a very simple
example to illustrate how ARCNC can be applied in cyclic networks.

\subsection{Combination Network} \label{example:combination}
A $n\choose m$ combination network contains a single source $s$ that multicasts $m$ independent messages over $\mathbb{F}_q$ through $n$ intermediate nodes to $d$ sinks \cite{NWT2005}; each sink is connected to a distinct set of $m$ intermediate nodes, and $d={n\choose m}$. Fig.~\ref{fig:combination} illustrates the topology of a combination network. Assuming unit capacity links, the
min-cut to each sink is $m$. It can be shown that, in combination networks, routing is insufficient and network coding is needed to achieve the multicast capacity $m$. Here coding is performed only at $s$, since each intermediate node has only $s$ as a parent node; an intermediate node simply relays to its children data from $s$. For a general $n \choose m$ combination
network, we showed in \cite{guo2011localized} that the expected
average first decoding time can be significantly improved by ARCNC
when compared to the deterministic BNC algorithm. We restate the
results here, with details of the derivations included.

\begin{figure}[t!]
  \centering
  \includegraphics[width=5.5cm]{./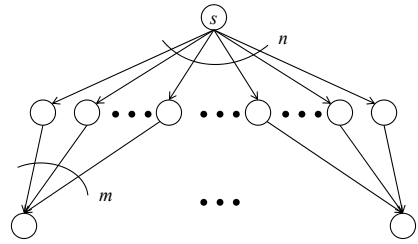}\\
  \caption{A combination network}\label{fig:combination}
\end{figure}

At time $t-1$, for a sink $r$ that has not satisfied the
decodability conditions, $F_r(z)$ is a size $m\times m$ matrix of
polynomials of degree $t-1$. $F_r(z)$ has full rank with probability
\begin{align}
Q&=(q^{tm}-1)(q^{tm}-q^t)\cdots(q^{tm}-q^{t(m-1)})/q^{tm^2} \notag\\
 &=(1-\frac{1}{q^{tm}})(1-\frac{1}{q^{t(m-1)}})\cdots(1-\frac{1}{q^t})\notag\\
 &=\prod_{l=1}^m\left(1-\frac{1}{q^{tl}}\right)\,. \label{eq:11suc}
\end{align}
Hence, the probability that sink $r$ decodes after time $t-1$ is
\begin{align}
P(T_r\geq t) & = 1- Q = 1- \prod_{l=1}^m\left(1-\frac{1}{q^{tl}}\right) \,, \quad t\geq 0. \label{eq:pt}
\end{align}
The expected first decoding time for sink node $r$ is therefore upper and lower-bounded as follows.
\begin{align}
E[T_r] & = \sum_{t=1}^\infty tP(T_r = t) = \sum_{t=1}^\infty P(T_r\geq t) \notag\\
     & = \sum_{t=1}^\infty \left(1-\prod_{i=1}^m\left(1-\frac{1}{q^{tr}}\right)\right)\\
     & < \sum_{t=1}^\infty \left(1-\left(1-\frac{1}{q^{t}}\right)^m\right)\label{eq:13mean}\\
     & = \sum_{t=1}^\infty \left(1-\sum_{k=0}^m(-1)^k{m\choose k}\left(\frac{1}{q^t}\right)^k\right)\label{eq:14mean}\\
     & = \sum_{k=1}^m(-1)^{k-1}{m\choose k}\left(\sum_{t=1}^{\infty}\frac{1}{q^{tk}}\right)\\
     & = \sum_{k=1}^m (-1)^{k-1}{m\choose k}\frac{1}{q^k-1} \triangleq ET_{UB}(m,q) \,. \label{eq:ETUB}
\end{align}
\begin{align}
E[T_r] & =  \sum_{t=1}^\infty\left(1-\prod_{l=1}^m\left(1-\frac{1}{q^{tl}}\right)\right) \\
     &  > \sum_{t=1}^\infty \left(1-\left(1-\frac{1}{q^{tm}}\right)^m\right)\\
     & = \sum_{k=1}^m (-1)^{k-1}{m\choose k}\frac{1}{q^{km}-1} \triangleq
     ET_{LB}(m,q)\,. \label{eq:ETLB}
\end{align}
Recall from Section~\ref{sec:stoppingTime} that the expected average
first decoding time is $E[T_{\text{avg}}] = E\left[
\frac{1}{d}\sum_{r=1}^{d} T_r \right]$. In a combination network, $
E[T_{\text{avg}}]$ is equal to $E[T_r]$. Consequently,
$E[T_{\text{avg}}]$ is upper-bounded by $ET_{UB}$, defined by
Eq.~\eqref{eq:ETUB}. $ET_{UB}$ is a function of $m$
and $q$ only, independent of $n$. For example, if $m=2$, $q=2$,
$ET_{UB} = \frac53$. If $m$ is fixed, but $n$ increases, $
E[T_{\text{avg}}]$ does not change. In addition, if $q$ is large,
$ET_{UB}$ becomes 0, consistent with the general analysis in
\cite{guo2011localized}.

Next, we want to bound the variance of $T_{\text{avg}}$, i.e.,
\begin{align}
\hspace{-5pt} var[T_{\text{avg}}]   & = E[T_{\text{avg}}^2] - E^2[T_{\text{avg}}] \notag \\[4pt]
         & = E\left[\left(\frac{1}{d}\sum_{r=1}^d T_r\right)^2\right] - E^2[T_r] \notag\\[4pt]
         & = \frac{E[T_r^2]}{d} + \left(\sum_{r=1}^{d}\sum_{r\neq r'}\frac{E(T_rT_{r'})}{d^2}\right) - E^2[T_r]\,. \label{eq:varTavg}
\end{align}
We upper-bound the terms above one by one. First,
\begin{align}
E[T_r^2] & = \sum_{t=1}^\infty t^2P(T_r=t) \\
         & = \sum_{t-1}^\infty t^2(P(T_r \ge t)-P(T_r \ge t+1)) \\
         & = \sum_{t=1}^\infty ((t+1)^2-t^2)P(T_r\geq t) \label{eq:lowdim}\\
         & < \sum_{t=1}^\infty (2t+1) \left(1-\left(1-\frac{1}{q^{t}}\right)^m\right) \label{eq:upbound11}\\
         & < ET_{UB} + 2 \sum_{k=1}^m (-1)^{k-1}{m\choose k}\sum_{t=1}^\infty \frac{t}{q^{tk}}\label{eq:apple}\\
         & = ET_{UB}  + 2 \sum_{k=1}^m (-1)^{k-1}{m\choose k}\left(\frac{q^k}{q^k-1}\right)^2 \label{eq:orange}\\
         & \triangleq (ET^2)_{UB}
\end{align}
Eq.~\eqref{eq:lowdim} follows through organization and simplifying.
Eq.~\eqref{eq:upbound11} is obtained by replacing the terms in
Eq.~\eqref{eq:lowdim} with the upperbound of Eq.~\eqref{eq:11suc}.
We represent Eq.~\eqref{eq:upbound11} with binomial expansion and
substitute
with the upperbound in Eq.~\eqref{eq:ETUB}. 
Next, let
$\rho_\lambda=E[T_rT_{r'}]$ if sinks $r$ and $r'$ share $\lambda$
parents, $0 \leq \lambda < m$. Thus, $\rho_0 = E^2[T_r]$. When
$\lambda \neq 0$, given sink $r$ succeeds in decoding at time $t_1$,
the probability that sink $r'$ has full rank before $t_2$ is
lower-bounded as follows,
\begin{align}
P(T_{r'} < t_2|T_r=t_1) > \prod_{l=1}^{m-\lambda}\left(1-\frac{1}{q^{t_2l}}\right) >\left(1-\frac{1}{q^{t_2}}\right)^{m-\lambda}\,.
\end{align}
Consequently, if $\lambda\neq0$,
\begin{align}
 \rho_\lambda & = E[T_rT_{r'}] \\
            & = \sum_{t_1=1}^\infty \sum_{t_2=t_1}^\infty t_1 t_2 P(T_r = t_1)P(T_{r'}=t_2|T_r=t_1)\\
            & = \sum_{t_1=1}^\infty t_1 P(T_r = t_1) \sum_{t_2=1}^\infty P(T_{r'}\geq t_2|T_r=t_1) \\
            & < \sum_{t_1=1}^\infty t_1 P(T_r = t_1) \sum_{t_2=1}^\infty \left(1-\left(1-\frac{1}{q^{t_2}}\right)^{m-\lambda}\right)\\
            & < \sum_{t_1=1}^\infty t_1 P(T_r = t_1) \sum_{k=1}^{m-\lambda}(-1)^{k-1}{ m-\lambda \choose k } \frac{1}{q^k-1}\\
            & < ET_{UB}\left(\sum_{k=1}^{m-\lambda}(-1)^{k-1}{m-\lambda\choose k}\frac{1}{q^k-1}\right)\\
            & \triangleq \rho_{\lambda,UB}
\end{align}

\noindent Let $\rho_{UB} =
\max\{\rho_{1,UB},\ldots,\rho_{m-1,UB}\}$. For a sink $r$, Let the
number of sinks  that share at least one parent with $r$ be
$\Delta$, then $\Delta = d-1-{n-m\choose m}$. Thus, the middle term
in Eq.~\eqref{eq:varTavg} is bounded by
$\frac{\Delta}{d}\rho_{UB}+\frac{d-1-\Delta}{d}E^2[T_r]$ and
\begin{align}
var[T_{\text{avg}}]
          & < \frac{(ET^2)_{UB}}{d} + \frac{\Delta}{d}\rho_{UB} - \left(\frac{\Delta+1}{d}\right)ET^2_{LB} \label{eq:varTvalue}\,.
\end{align}
Depending on the relative values of $n$ and $m$, we have the following three cases.

\begin{itemize}
 \item $n>2m$, then ${n-m \choose m} = \frac{(n-m)!}{m!(n-2m)!}$,and
\begin{align}
\frac{\Delta}{d}  & = 1-\frac1d -\frac{{n-m \choose m}}{d}  \\
     & = 1-\frac1d -\frac{(n-m)!(n-m)!}{n!(n-2m)!} \\
        & = 1-\frac1d-\frac{(n-m)(n-m-1)\ldots(n-2m+1)}{n(n-1)\ldots(n-m+1)}\\
        & = 1-\frac1d-\left(\frac{n-m}{n}\right)\ldots \left(\frac{n-2m+1}{n-m+1}\right)\\
        & < 1-\frac1d-\left(\frac{n-2m+1}{n-m+1}\right)^m \label{eq:Delta_d}\,.
\end{align}
Observe from Eqs.~\eqref{eq:ETUB} and \eqref{eq:ETLB} that all of
the upper-bound and lower-bound constants are functions of $m$ and
$q$ only. If $m$ and $q$ are fixed and $n$ increases, in
Eq.~\eqref{eq:Delta_d}, both $\frac{\Delta}{d}$ and
$\frac{\Delta+1}{d}$ approaches 0.  Therefore, $var(T)$ diminishes
to 0. Combining this result with the upper-bound $ET_{UB}$, we can
conclude that, when $m$ is fixed, even if more intermediate nodes
are added, a large proportion of the sink nodes can still be decoded
within a small number of coding rounds.

\item $n=2m$, then ${n-m \choose m} = 1$, $\frac{\Delta}{d} = 1-\frac{2}{d}$, and
\begin{align}
\hspace{-10pt} var[T_{\text{avg}}]
& <\frac{(ET^2)_{UB}}{d} + \left(1-\frac{2}{d}\right)\rho_{UB}- \left(1-\frac{1}{d}\right)ET_{LB}^2 \notag\\
& <\frac{(ET^2)_{UB}}{d} + \rho_{UB}- \left(1-\frac{1}{d}\right)ET_{LB}^2
\end{align}
Here $m$ and $n$ are comparable in scale, and the bounds depend on the exact values of $ET^2_{UB}$, $\rho_{UB}$ and $ET_{UB}$. We will illustrate through simulation in Section~\ref{subsec:simuCombination} that in this case, $T_{avg}$
also converges to 0.

\item $n<2m$, then ${n-m \choose m} = 0$, $\frac{\Delta}{d} = 1-\frac{1}{d}$, and
\begin{align}
var[T_{\text{avg}}] & < \frac{(ET^2)_{UB}}{d} + \rho_{UB}- ET_{LB}^2,
\end{align}
\noindent similar to the second case above.
\end{itemize}

Comparing with the deterministic BNC by Xiao et~al. \cite{xiao2008binary}, we can see that, for a large combination network, with fixed $q$ and $m$, ARCNC achieves much lower first decoding time. In BNC, the block length is required to be $p\geq n-m$ at minimum; the decoding delay increases at least linearly with $n$, where as in ARCNC, the expected average first decoding time is independent of the value of $n$. On the other hand, with RLNC \cite{HMK2006}, the multicast capacity can be achieved with probability $(1-d/q)^n$.  The exponent $n$ is the number of links with random coefficients; since each intermediate node has the
source as a single parent, coding is performed at the source only, and coded data are transmitted on the $n$ outgoing arcs from the source. When $q$ and $m$ are fixed, the success probability of RLNC decreases exponentially in $n$. Thus, an exponential number of trials is needed to find a successful RLNC. Equivalently, RLNC can use an increasingly large field size $q$ to maintain the same decoding probability.

So far we have used $n\choose m$ combination networks explicitly to illustrate the operations and the decoding delay gains of ARCNC. It is important to note, however, that this is a very restricted family of networks, in which only the source is required to code, and each sink shares at least $1$ parent with
other ${n \choose m}-{n-m\choose m}-1$ sinks. In terms of memory, if sink $r$ cannot decode, all sinks related to $r$ are required to increase their memory capacity. Recall from Subsection~\ref{sec:memory} that a non-source node $v$ is said to be related to a sink $r$ if $v$ is an ancestor of $r$, or if $v$ shares an non-source ancestor with $r$. As $n$ becomes larger, the number of nodes related to $r$ increases, especially if $m$ increases too. Thus, in combination networks, we do not see considerable gains in terms of memory overheads when compared with BNC, unless $m$ is small. In more general networks, however, when sinks do not share ancestors with as many other sinks, ARCNC can achieve gains in terms of memory overheads as well, in addition to decoding delay. As an example, we define a \emph{sparsified combination network} next.

\subsection{Regular sparsified Combination Network} \label{example:sparsecombnet}

\begin{figure}[t!]
  \centering
  \includegraphics[width=6cm]{./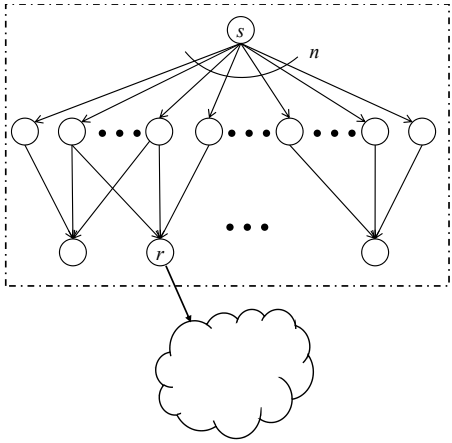}\\
  \caption{A regular sparsified combination network (inside the dotted frame) with an extension. }\label{fig:dconstraint}
\end{figure}

We define a regular sparsified combination network as a modified combination
network, with only \emph{consecutive} intermediates nodes connected to unique sink nodes. The framed component in Fig.~\ref{fig:dconstraint} illustrates its structure. Source $s$ multicasts $m$ independent messages through $m$ intermediate nodes to each sink, with $n-m+1$ sinks in total. This topology can be viewed as an abstraction of a content distribution network, where the source distributes data to intermediate servers, and clients are required to connect to $l$ servers closest in distance to collect enough degrees of freedom to obtain the original data content. This network can be arbitrarily large in scale.

In a regular sparsified combination network, the number of other sinks related to a sink $r$ is fixed at $2(m-1)$, and is even smaller if $r$'s parents are on the edge of the intermediate layer. Thus, the average first decoding time of sinks in a regular sparsified combination network behaves similarly to the fixed $m$ case discussed in the previous subsection, approaching 0 as $n$ goes to infinity.

On other hand, since now each intermediate node is connected to a fixed number of $m$ sinks as well, when a sink $r$ fails to decode and requests an increment in code length, a maximum of $m+2(m-1)=3m-2$ related nodes are required to increase their memory capacity. To compute $W_\text{avg}$ using Eq.~\eqref{eq:Wavg}, observe that for a sink $r$, assuming there are the maximum number of $2(m-1)$ other sinks related to $r$, the cumulative probability distribution of $L_r$ is as follows
\begin{align*}
& \Pr\{L_r<t\} \\ & = \Pr \{T_{r-m+1} <t,\ldots, T_{r} <t, \ldots, T_{r+m-1} <t\}\\
& = \Pr \{T_{r-m+1} <t\} \Pr \{T_{r-m+2} <t|T_{r-m+1} <t\}  \\
& \quad \ldots \Pr \{T_{r+m-1} <t|T_{r-m+1} <t,\ldots, T_{r+m-2} <t\} \\
& = Q \left(1-\frac{1}{q^t}\right)^{2m-2}
\end{align*}
\noindent where $Q$ is defined in Eq.~\eqref{eq:11suc}. Thus, using the derivation from Eq.~\eqref{eq:pt} to \eqref{eq:ETUB}, we have
\begin{align*}
E[L_r] & = \sum_{t=1}^\infty P(L_r\geq t) \notag\\
& = \sum_{t=1}^\infty \left(1-\prod_{i=1}^m\left(1-\frac{1}{q^{tr}}\right)\left(1-\frac{1}{q^{t}}\right)^{2m-2}\right)\\[4pt]
& < ET_{UB}(3m-2, q)
\end{align*}

\noindent Similarly, for an intermediate node $v$, we can bound $E[L_v]$ by $
ET_{UB}(2m, q)$. Clearly $W_\text{avg,ARCNC}$ computed using
Eq.~\eqref{eq:Wavg} is a function of $m$ and $q$ only, independent
of $n$. In other words, in a regular sparsified combination network,
since each sink has a fixed number of parents, and are related to a
fixed number of other sinks through its parents, the average amount
of memory use across the network is independent of $n$.

On the other hand, assume a field size of $q_R$ is used for RLNC code generation. There is a single coding node in the network, with $n-m+1$ sinks. To guarantee an overall success probability larger than $1 - \varepsilon $, we have $(1 - \frac{n-m+1}{q_R-1
})^2  > 1 - \varepsilon $ from \cite{BYZ2009}. Hence
\begin{align*}
E[W_{\text{avg,RLNC}} ] = \left\lceil {\log _2 q_R } \right\rceil  > \left\lceil \log _2 (1 +
\frac{n-m+1}{1-\sqrt{1 - \varepsilon }})\right\rceil,
\end{align*}
which can be very large if $n$ is large and $\varepsilon$ is small. 

Comparing the lower-bound on $E[W_{\text{avg},\text{RLNC}}]$ and the
upper-bound on $E[W_{\text{avg},\text{ARCNC}}]$, we see that the
gain of ARCNC over RLNC in terms of memory use is infinite as $n$
increases because $E[W_{\text{avg},\text{ARCNC}}]$ is bounded by a
constant value.
 
An intuitive generalization of this observation is to extend
this regular sparsified combination network by attaching another
arbitrary network off one of the sinks, as shown in
Fig.~\ref{fig:dconstraint}. Regardless of the depth of this
extension from the sink $r$, as $n$ increases, memory overheads can
be significantly reduced with ARCNC when compared with RLNC, since
most of the sinks and intermediate nodes are unrelated to $r$, thus
not affected by the decodability of sinks within the extension.

\subsection{Shuttle Network}\label{example:shuttle}
\begin{figure}[t!]
  \centering
  \includegraphics[width=5.7cm]{./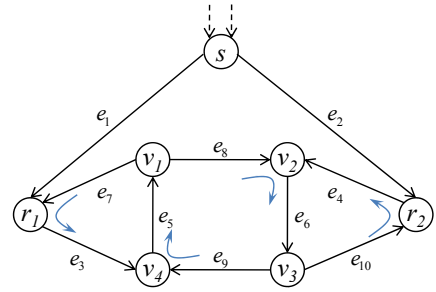}\\
  \caption{The shuttle network. Each link has unit capacity. $s$ is the source; $r_1$ and $r_2$ are sinks each with a min-cut of 2. Edges are directed and labeled as $e_i$, $1\leq i\leq 10$. Edges indices are assigned according to Section~\ref{sec:algCyclic}. An adjacent pair $(e',e)$ is labeled with a curved pointer if $e'\succeq e$.}\label{fig:shuttlenet}
\end{figure}
In this section, we illustrate the use of ARCNC in cyclic networks
by applying it to a shuttle network, shown in
Fig.~\ref{fig:shuttlenet}. We do not provide a formal definition for
this network, since its topology is given explicitly by the figure.
Source $s$ multicasts to sinks $r_1$ and $r_2$. Edges $e_i$, $1\leq
i\leq 10$, are directed. The edge indices have been assigned
according to  Section~\ref{sec:algCyclic}. An adjacent pair $(e',e)$
is labeled with a curved pointer if $e'\succeq e$. There are three
cycles in the network; the left cycle is formed by $e_3$, $e_5$, and
$e_7$; the middle cycle is formed by $e_5$, $e_8$, $e_6$, and $e_9$;
the right cycle is formed by $e_4$, $e_6$, and $e_{10}$. In this
example, we use a field size $q=2$. At node $v$, the local encoding
kernel matrix is $K_v(z)=(k_{e',e}(z))_{e'\in In(v), e\in
Out(v)}=K_{v,0}+K_{v,1}z+K_{v,2}z^2+\ldots$; each local encoding
kernel is a polynomial, $k_{e',e}(z)
=k_{e',e,0}+k_{e',e,1}z+k_{e',e,2}z^2+\ldots$. At $s$, assume
$f_{e_1}(z) = {1 \choose 0}$, and $f_{e_2}(z) = {0 \choose 1}$,
i.e., the data symbols sent out from $s$ at time $t$ are $y_{e_1,t}
= x_{1,t}$ and $y_{e_2,t} = x_{2,t}$, respectively. The source can
also linearly combine source symbols before transmitting on outgoing
edges.

At $t=0$, we assign 0 to local encoding kernel coefficients
$k_{e',e,0}$ if $e'\succeq e$; and choose  $k_{e',e,0}$ uniformly
randomly from $\mathbb{F}_2$ otherwise. One possible assignment is
given in Fig.~\ref{fig:shuttle0}. Here we circle $k_{e',e,0}$ if
$e'\succeq e$. Since $q=2$, we set all other local encoding kernel
coefficients to 1. The data messages transmitted on each edge at
$t=0$ are then derived and labeled on the edge. Observe that, $r_1$
receives $x_{1,0}$ and $r_2$ receives $x_{2,0}$; neither is able to
decode both source symbols. Hence no acknowledgment is sent in the
network.

\begin{figure}[t!]
  \centering
  \includegraphics[width=8.5cm]{./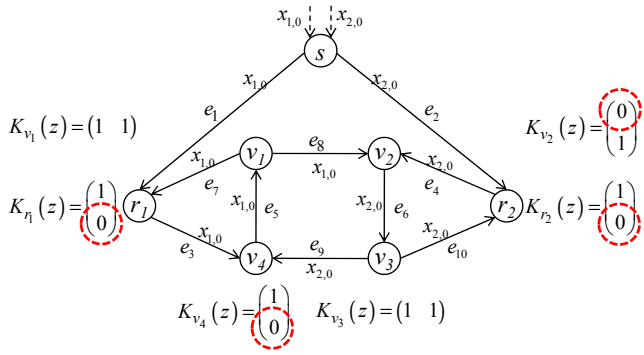}\\
  \caption{An example of local encoding kernel matrices at $t=0$. For a node $v$, $K_v(z)=(k_{e',e}(z))_{e'\in In(v), e\in Out(v)}=K_{v,0}+K_{v,1}z+K_{v,2}z^2+\ldots$. For any adjacent pair $(e',e)$ where $e'\succeq e$, $k_{e',e,0}=0$. Each edge $e$ is labeled with the data symbol $y_{e,t}$ it carries, e.g., $y_{e_1,0}=x_{1,0}$.}\label{fig:shuttle0}
\end{figure}
\begin{figure}[t!]
  \centering
  \includegraphics[width=8.5cm]{./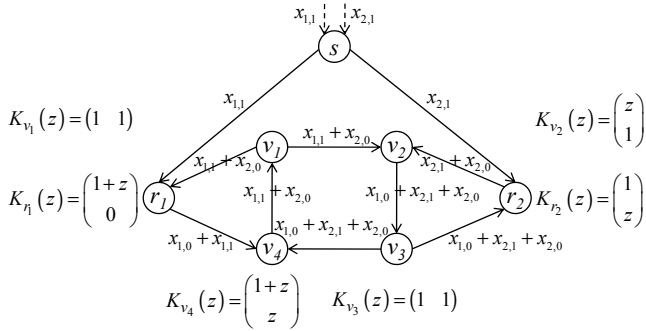}\\
  \caption{An example of local encoding kernel matrices at $t=1$.} 
\label{fig:shuttle1}
\end{figure}

At $t=1$, we proceed as in the acyclic case, randomly choosing coefficients $k_{e',e,1}$ from $\mathbb{F}_2$. Since no acknowledgment has been sent by $r_1$ or $r_2$, all local encoding kernels increase in length by 1. One possible coding kernel coefficient assignment is given in Fig.~\ref{fig:shuttle1}. Both $v_1$ and $v_3$ have one incoming edge only and thus route instead of code, i.e., $K_{v_1}(z)=K_{v_3}(z)=(1 \quad 1)$. The other local encoding kernel matrices in this example are as follows

\begin{align*}
K_{r_1,1}(z) & =
  {k_{e_1,e_3,0} \choose k_{e_7,e_3,0}}
   +  {k_{e_1,e_3,1} \choose k_{e_7,e_3,1}} z
  = {1\choose 0} + {1\choose 0}z\,,\\[2pt]
 K_{r_2,1}(z) & =  {k_{e_2,e_4,0} \choose k_{e_{10},e_4,0}}
   +  {k_{e_2,e_4,1} \choose k_{e_{10},e_4,1}} z
  = {1 \choose 0} + {0 \choose 1}z\,,  \\[2pt]
K_{v_4,1}(z)  & =  {k_{e_3,e_5,0} \choose k_{e_9,e_5,0}}
   +  {k_{e_3,e_5,1} \choose k_{e_9,e_5,1}} z
  = {1 \choose 0}+{1 \choose 1}z\,,\\[2pt]
 K_{v_2,1}(z) & =  {k_{e_4,e_6,0} \choose k_{8_1,e_6,0}}
   +  {k_{e_4,e_6,1} \choose k_{e_8,e_6,1}} z
  = {0 \choose 1}+{1 \choose 0}z.
\end{align*}
Data symbols generated according to Eq.~\eqref{eq:yet} for this
particular code are also labeled on the edges. For
example, on edge $e_5 =(v_4,v_1)$, the data symbol transmitted at
$t=1$ is
\begin{align}
 y_{e_5,1} & =   y_{e_3,0}k_{e_3,e_5,1}+ y_{e_3,1}k_{e_3,e_5,0} \notag \\
               & \quad \quad + y_{e_9,0}k_{e_9,e_5,1}+ y_{e_9,1}k_{e_9,e_5,0} \label{eq:ye5} \\
           & =   x_{1,0}\cdot 1 + (x_{1,0}+x_{1,1})\cdot 1 + x_{2,0}\cdot 1 + y_{e_9,1}\cdot 0 \notag \\
           & = x_{1,1} + x_{2,0} \notag
\end{align}

Observe that there are no logical contradictions in any of the three cycles. For example, in the middle cycle, on $e_5$, regardless of the value of $y_{e_9,1}$, the incoming data symbol at $t=1$, $y_{e_5,1}$, can be evaluated as in Eq.~\eqref{eq:ye5}. In other words, in evaluating the global
encoding kernel coding coefficients according to Eqs.~\eqref{eq:fe}
and \eqref{eq:fet}, even though $f_{e_9,1}$ is unknown, $f_{e_5,1}$
can still be computed since $k_{e_9,e_5,0}=0$.

Also from Fig.~\ref{fig:shuttle1}, observe that both sinks can
decode two source symbols at $t=1$: $r_1$ can decode $x_{1,1}$ and
$x_{2,0}$, while $r_2$ can decode $x_{2,1}$ and $x_{1,0}$.
Equivalently, we can compute the global encoding kernel matrices and
check the decodability conditions given in
Section~\ref{sec:algAcyclic}. We omit the details here, but
interested readers can verify using Eq.~\eqref{eq:fe} that the
global encoding matrices are $F_{r_1}(z)={1\,\,\, 1 \choose 0\,\,\,
z}$, $F_{r_2}(z)={0\quad z \,\, \choose 1\,\,\, 1+z}$, and the
decodability conditions are indeed satisfied. Acknowledgments are
sent back by both sinks to their parents, code lengths stop to
increase, and ARCNC terminates. The first decoding time for both
sinks is therefore $T_{r_1}=T_{r_2}=1$.

As we have discussed in Section~\ref{sec:algCyclic}, the deterministic edge indexing scheme proposed is an universal but heuristic way of assigning local encoding kernel coefficients at $t=0$. In this shuttle network example, observe from Fig.~\ref{fig:shuttle0} that in the middle cycle composed of edges $e_8$, $e_6$, $e_9$ and $e_5$, this scheme introduces two zero coefficients, i.e., $k_{e_8,e_6,0}=0$, and $k_{e_9,e_5,0}=0$. A better code would be to allow one of these two coefficients to be non-zero. For example, if $k_{e_8,e_6,0}=1$, $K_{v_2}(z)={1 \choose 1}$ at $t=0$. It can be shown in this case that the data symbol transmitted on $e_{10}$ to $r_2$ is $y_{e_{10},0}=x_{1,0}+x_{2,0}$,
enabling $r_2$ to decode both source symbols at time $t=0$.

\section{Simulations}\label{sec:simulations}
We have shown analytically that ARCNC converges in finite steps with
probability 1, and that it can achieve gains in decoding time or memory in combination networks. In what follows, we want to verify these results through simulations, and to study numerically whether similar behaviors can be observed in random networks. We implemented the proposed encoder and decoder in \textsc{matlab}. In all instances, it can be observed that decoding success was achieved in a finite amount of time. All results plotted in this section are averaged over 1000 runs.

\subsection{Combination Network}\label{subsec:simuCombination}
Recall from Section~\ref{example:combination} that an upper bound $ET_{UB}$ and a lower bound $ET_{LB}$ for the average expected first decoding time $E[T_{\text{avg}}]$ can be computed for a $n \choose m$ combination network. Both are functions of $m$ and $q$, independent of $n$. In evaluating $var[T_{\text{avg}}]$, three cases were considered, $n>2m$, $n=2m$, and $n<2m$. When $n>2m$, the number of sinks unrelated to a given sink $r$ is significant. If it takes $r$ multiple time steps to achieve decodability, not all other sinks and intermediate nodes have to continue increasing their encoding kernel length to accommodate $r$. Thus, ARCNC can offer
gains in terms of decoding delay and memory use. We show simulation results below for the case when $m$ is fixed at the value of $2$, while $n$ increases. By comparison, if $n\leq2m$, there is a maximum of one sink related to a given sink $r$. We show simulation results below for the case of $n=2m$.

\begin{figure}[t!]
  \centering
  \includegraphics[width=8.5cm]{./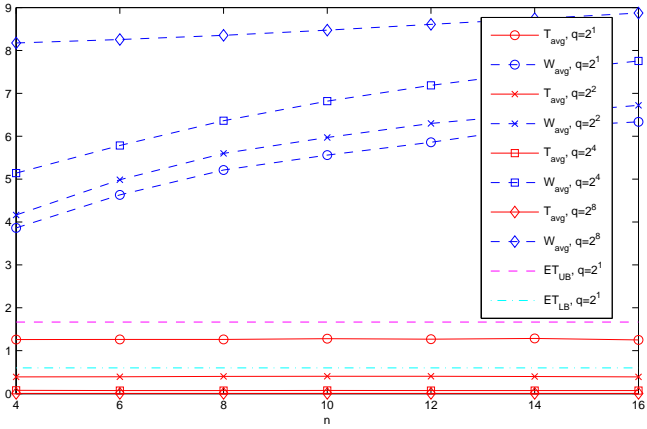}\\
  \caption{Average first decoding time and average memory use. $m=2$, $n$ increases, field size $q$ also increases. Also plotted are the computed upper and lower bounds on $T_{\text{avg}}$ for $q=2$.}
  \label{fig:combination_fixedM_2_qq}
\end{figure}

\subsubsection{$n>2m$, fixed $m$, $m=2$}
Fig.~\ref{fig:combination_fixedM_2_qq} plots the average first decoding time $T_{\text{avg}}$, corresponding upper and lower bounds $ET_{UB}$, $ET_{LB}$, and average memory use $W_{\text{avg}}$, as defined in Section~\ref{sec:memory}. Here $m$ is fixed to the value of 2, $n$ increases from 4 to 16, and the field size is $q=2$. As discussed in Section~\ref{example:combination}, $ET_{UB}$ and
$ET_{LB}$ are independent of $n$. As $n$ increases, observe that $T_{\text{avg}}$ stays approximately constant at about 1.3, while $W_{\text{avg}}$ increases sublinearly. When $n=16$, $W_{\text{avg}}$ is approximately 6.3. On the other hand, recall from \cite{BYZ2009} that a lower bound on the success probability of
RLNC is $(1-d/(q-1))^{|J|+1}$, where $|J|$ is the number of encoding nodes. In a combination network with $n=16$ and $m=2$, $|J|=1$ since only the source node codes. For a target decoding probability of $0.99$, we have $(1-{16\choose 2}/(q-1))^2 \geq 0.99$, thus $q>2.4\times 10^4$, and $\lceil \log_2q \rceil \geq 15$. Since each encoding kernel contains at least one term, $W_{\text{avg}}$ is lower bounded by $\lceil \log_2q\rceil$. Hence, using ARCNC here reduces memory use by half when compared with RLNC.

Fig.~\ref{fig:combination_fixedM_2_qq} also plots $T_{\text{avg}}$
and $W_{\text{avg}}$ when field size $q$ increases from $2^1$ to
$2^8$. As field size becomes larger, $T_{\text{avg}}$ approaches 0.
When $q=2^8$, the value of $T_{\text{avg}}$ is close to $0.004$. As
discussed in Section~\ref{example:combination}, when $q$
becomes sufficiently large, ARCNC terminates at $t=0$, and
generates the same code as RLNC. Also observe from this figure that
as $n$ increases from 4 to 16, $W_{\text{avg}}$ increases as well,
but at different rates for different field sizes. Again,
$W_{\text{avg}}$ is lower bounded by $\lceil \log_2q \rceil$. When
$q=2^8$, $W_{\text{avg}}$ follows an approximately linear trend,
with an increment of less than 1 between $n=4$ and $n=16$. One
explanation for this observation is that for $m=2$, a field size of
$q=2^8$ is already sufficient for making ARCNC approximately the
same as RLNC.

\begin{figure}[t!]
 \centering
 \includegraphics[width=8.5cm]{./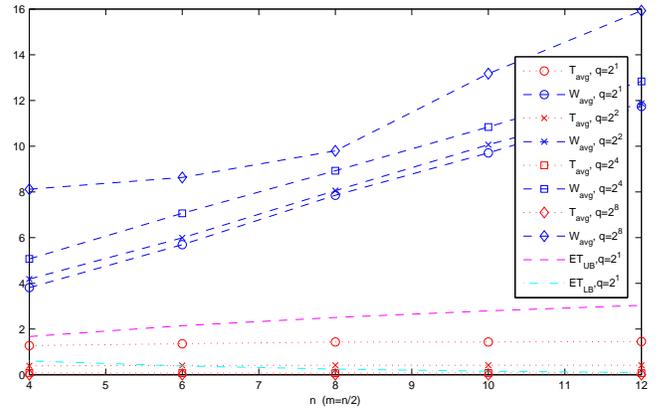}\\
 \caption{Average first decoding time and average memory use. $n=2m$, n increases, field size $q$ also increases. Also plotted are the computed upper and lower bounds on $T_{\text{avg}}$ for $q=2$.}
 \label{fig:combination_variedM_nOver2_qq}
\end{figure}

\subsubsection{$n=2m$}
Fig.~\ref{fig:combination_variedM_nOver2_qq} plots
$T_{\text{avg}}$, $W_{\text{avg}}$, and corresponding bounds on
$T_{\text{avg}}$ when $n=2m$, $q=2$. Since $m$
increases with $n$, $ET_{UB}$ and $ET_{LB}$ change with the value of
$n$ as well. Observe that $T_{\text{avg}}$ increases from
approximately 1.27 to approximately 1.45 as $n$ increases from 4 to
12. In other words, even though more sinks are present, with each
sink connected to more intermediate nodes, the majority of sinks are
still able to achieve decodability within very few coding steps.
However, since now $n=2m$, any given sink $r$ is related to all but one other sink; even a single sink requiring additional
coding steps would force almost all sinks to use more memory to
store longer encoding kernels. Compared with Fig.~\ref{fig:combination_variedM_nOver2_qq}, $W_{\text{avg}}$
appears linear in $n$ in this case.

Fig.~\ref{fig:combination_variedM_nOver2_qq} also plots
$T_{\text{avg}}$ and $W_{\text{avg}}$ when $q$ increases. Similar to
the $m=2$ case shown in Fig.~\ref{fig:combination_variedM_nOver2_qq}, $T_{\text{avg}}$ approaches 0 as $q$ becomes larger. $W_{\text{avg}}$ appears linear in $n$ for $q\leq 2^6$, and piecewise linear for $q=2^8$.
This is because $W_{\text{avg}}$ is lower bounded by $\lceil \log_2q
\rceil$. When $n$ becomes sufficiently large, this lower bound is
surpassed, since $q=2^8$ no longer suffices in making all nodes
decode at time 0, thus making ARCNC a good approximation of RLNC.

\subsection{Shuttle Network}\label{subsec:simuShuttle}
\begin{figure}[t!]
  \centering
  \includegraphics[width=8.5cm]{./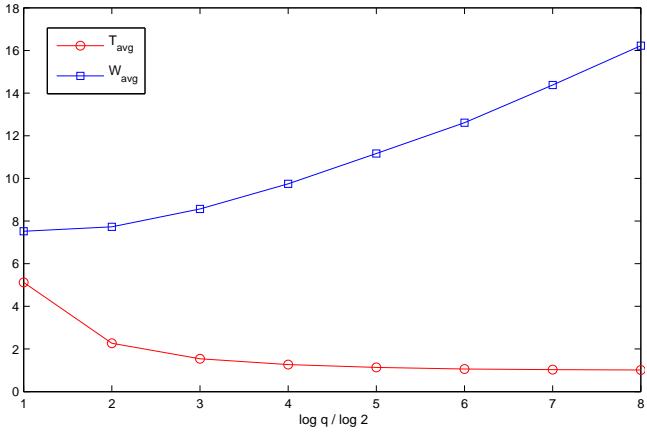}\\
  \caption{Average decoding delay and average code length for the shuttle network as a function of field size.}
  \label{fig:shuttle_qq}
\end{figure}
When there are cycles in the network, as discussed in Section~\ref{sec:algCyclic}, we numerically index edges, and assign local encoding kernels at $t=0$ according to the indices such that no logical contradictions exist in data transmitted around each cycle. Fig.~\ref{fig:shuttle_qq} plots $T_{\text{avg}}$ and $W_{\text{avg}}$ for the shuttle network, with the index assignment given in Fig.~\ref{fig:shuttlenet}. As discussed in the example shown in Figs.~\ref{fig:shuttle0} and \ref{fig:shuttle1}, with this edge index index, both $r_1$ and $r_2$ require at least 2 time steps to achieve decodability. This conclusion is verified by the plot shown in Figure~\ref{fig:shuttle_qq}. As field size $q$ increases, $T_{\text{avg}}$ converges to 1, while $W_{\text{avg}}$ converges to $2\log_2q$. When $q=2^1$, $T_{\text{avg}}$ is 5.1.

\subsection{Acyclic and Cyclic Random Geometric Networks} \label{subsec:simuAcyclicRandom}

To see the performance of ARCNC in random networks, we use random geometric graphs \cite{newman2003random} as the network model, with added acyclic or cyclic constraints. In random geometric graphs, nodes are put into a geometrically confined area $[0,1]^2$, with coordinates chosen uniformly randomly. Nodes which are within a given distance are connected. Call this distance the connection radius. In our simulations, we set the connection radius to 0.4. The resulting graph is inherently bidirectional. 

For acyclic random networks, we number all nodes, with source as node 1, and sinks as nodes with the largest numbers. A node is allowed to transmit to only nodes with numbers larger than its own. An intermediate node on a path from the source to a sink can be a sink itself. To ensure the max-flow to each receiver is non-zero, one can choose the connection radius to make the graph connected with high probability; we fix this value to $0.4$, and throw away instances where at least one receiver is not connected to the source. Once an acyclic random geometric network is generated, we use the smallest min-cut over all sinks as the source symbol rate, which is the number of source symbols generated at each time instant.

Figs.~\ref{fig:random_acyclic_and_cyclic_total25_rx2to12_qq2_sinksLast}
and \ref{fig:random_acyclic_and_cyclic_total10to45_rx3_qq2_sinksLast}
plot the average first decoding time $T_\text{avg}$ and average memory use $W_\text{avg}$ in acyclic random geometric networks.
Fig.~\ref{fig:random_acyclic_and_cyclic_total25_rx2to12_qq2_sinksLast}
shows the case where there are 25 nodes within the network, with more counted as sinks, while Fig.~\ref{fig:random_acyclic_and_cyclic_total10to45_rx3_qq2_sinksLast}
shows the case where the number of sinks is fixed to 3, but more nodes are added to the network. In both cases, $T_{\text{avg}}$ is less than 1, indicating that decodability is achieved in 2 steps with high probability. In Fig.~\ref{fig:random_acyclic_and_cyclic_total25_rx2to12_qq2_sinksLast}, the dependence of $W_{\text{avg}}$ on the number of sinks is not very strong, since there are few sinks, and each node is connected to only a small portion of all nodes. In Fig.~\ref{fig:random_acyclic_and_cyclic_total10to45_rx3_qq2_sinksLast}, $W_{\text{avg}}$ grows as the number of nodes increases, since on average, each node is connected to more neighboring nodes, thus its memory use is more likely to be affected by other sinks.

\begin{figure}[t!]
  \centering
  \includegraphics[width=8.6cm]{./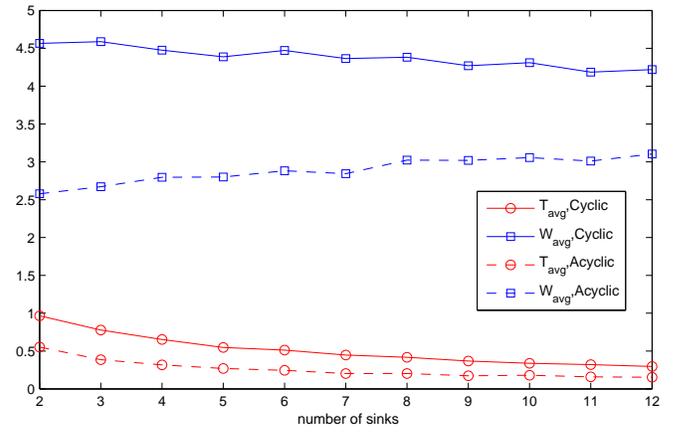}\\
  \caption{Average first decoding time and average memory use in acyclic and cyclic random geometric graphs with 25 nodes, as a function of the number of receivers. Field size is $q=2^2$.}
  \label{fig:random_acyclic_and_cyclic_total25_rx2to12_qq2_sinksLast}
\end{figure}

\begin{figure}[t!]
  \centering
  \includegraphics[width=8.5cm]{./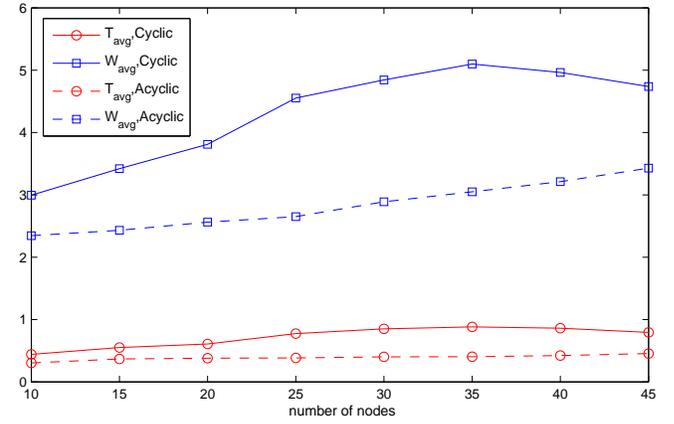}\\
  \caption{Average first decoding time and average memory use in a cyclic random geometric graph with 3 receivers, as a function of the total number of nodes in the network. Field size is $q=2^2$.}
  \label{fig:random_acyclic_and_cyclic_total10to45_rx3_qq2_sinksLast}
\end{figure}

To see the performance of ARCNC in cyclic random networks, note that
random geometric graphs are inherently bidirectional. We apply the
following modifications to make the network cyclic. First, we number
all nodes, with source as node 1, and sinks as the nodes with the
larges numbers. Second, we replace each bidirectional edge with 2
directed edges. Next, a directed edge from a lower numbered to a
higher numbered node is removed from the graph with probability 0.2,
and a directed edge from a higher numbered to a lower numbered node
is removed from the graph with probability 0.8. Such edge removals
ensure that not all neighboring node pairs form cycles, and cycles
can exist with positive probabilities. We do not consider other edge
removal probabilities in our simulations. The effect of random graph
structure on the performance of ARCNC is a non-trivial problem and
will not be analyzed in this paper.

Figs.~\ref{fig:random_acyclic_and_cyclic_total25_rx2to12_qq2_sinksLast}
and
\ref{fig:random_acyclic_and_cyclic_total10to45_rx3_qq2_sinksLast}
also plot the average first decoding time and average memory use in
cyclic random geometric networks. Again, in both cases, the average
first decoding time $T_{\text{avg}}$ is less than 1, indicating that
decodability is achieved in 2 steps with high probability.
$W_{\text{avg}}$ stays approximately constant when more nodes
becomes sinks. On the other hand, when the number of sinks is fixed
to 3, while more nodes are added to the network, $W_{\text{avg}}$
first increases, then decreases in value. This is because as more
nodes are added, since the connection radius stays constant at 0.4,
each node is connected to more neighboring nodes. Sharing parents
with more nodes first increase the memory use of a given node.
However, as more nodes are added and more cycles form, edges are
utilized more efficiently, thus bringing down both $T_{\text{avg}}$
and $W_{\text{avg}}$. Note that when compared with the acyclic case,
cyclic networks with the same number of nodes or same number of
sinks require longer decoding time as well as more memory. This is
expected, since with cycles, sinks are related to more nodes in
general.

\section{Conclusion}\label{sec:conclusion}
We propose an adaptive random convolutional network code (ARCNC),
operating in a small field, locally and automatically adapting to the network topology by incrementally growing the constraint length of the convolution process. Through analysis and simulations, we show that ARCNC performs no worse than random algebraic linear network codes in terms of decodability, while bringing significant gains in terms of decoding delay and memory use in some networks. There are three main advantage of ARCNC compared with scalar network codes and conventional convolutional network codes. Firstly, it operates in a small finite field, reducing the computation overheads of encoding and decoding operations. Secondly, it adapts to unknown network topologies, both acyclic and cyclic. Lastly, it allows codes of different constraint lengths to co-exist within a network, thus bringing practical gains in terms of smaller decoding delays and reduced memory use. The amount of gains achievable through ARCNC is dependent on the number of sinks connected to each other through mutual ancestors. In practical large-scale networks, the number of edges connected to an intermediate node or a sink is always bounded. Thus ARCNC could be beneficial in most general cases. 

One possible extension of this adaptive algorithm is to consider its use with other types of connections such as multiple multicast, or multiple unicast. Another possible direction of future research is to understand the impact of memory used during coding on the rates of innovative data flow through paths along the network. ARCNC presents a feasible solution to the multicast problem, offering gains in terms of delay and memory, but it is not obvious whether a constraint can be added to memory, while jointly optimizing rates achievable at sinks through this adaptive scheme.


\bibliographystyle{IEEEtranTCOM}
\bibliography{references}

\end{document}